\renewcommand{\algorithmicrequire}{\textbf{Input:}}
\let\pragma@iinput=\@iinput
\def\@iinput#1{\xdef\@pragmafile{#1}\pragma@iinput{#1} }
\def\@pragmafile{default}
\def\pragmaonce{%
   \csname pragma@\@pragmafile\endcsname
   \global\expandafter\let \csname pragma@\@pragmafile\endcsname =  
}
\Crefname{ALC@unique}{Line}{Lines}
\newtheorem{theorem}{Theorem}
\newglossaryentry{differentia}
{
    name=differentia,
    description={randomly generated information that can be used to differentiate two strata at the same layer (with a probability of spurious collision)}
}
\newglossaryentry{stratum}
{
    name=stratum,
    description={differentia container associated with a particular historical stage}
}
\newglossaryentry{deposit}
{
    name=deposit,
    description={the act of extending the historical record with a new stratum}
}
\newglossaryentry{deposition}
{
    name=deposition,
    description={the stratum appended to the historical record}
}
\newglossaryentry{time point}
{
    name=time point,
    description={refers to the stage where a specific number of depositions have taken place}
}
\newglossaryentry{deposition time}
{
    name=deposition time,
    description={refers to the time point at which a stratum was deposited, zero indexed}
}
\newglossaryentry{genesis}
{
    name=genesis,
    description={the time point associated with the first stratum deposition}
}
\newglossaryentry{time}
{
    name=time,
    description={number of depositions elapsed between two time points}
}
\newglossaryentry{layer age}
{
    name=layer age,
    description={the number of depositions elapsed since a layer's deposition time}
}
\newglossaryentry{record depth}
{
    name=record depth,
    description={the number of depositions elapsed onto the historical record --- the number of layers within a historical record}
}
\newglossaryentry{layer}
{
    name=layer,
    description={position within a historical record associated with a single time point, which may or may be occupied}
}
\newglossaryentry{retained/pruned layer}
{
    name=retained/pruned layer,
    description={a layer within a historical record with present/removed strata, respectively}
}
\newglossaryentry{layer time point}
{
    name=layer time point,
    description={the deposition time associated with a layer}
}
\newglossaryentry{pruning}
{
    name=pruning,
    description={deletion of strata from a historical record (i.e., to reduce space occupied by the record) --- also used to refer to deletion of perfect tracking records for extinct lineages,}
}
\newglossaryentry{retention}
{
    name=retention,
    description={the act of carrying over a stratum into the next time point during the stratum deposition process}
}
\newglossaryentry{gap}
{
    name=gap,
    description={layers associated with contiguous time points that have been pruned ---- introduces inference uncertainty when comparing two columns}
}
\newglossaryentry{gap width}
{
    name=gap width,
    description={the number of contiguous time points that have been pruned --- gap with increases inference uncertainty}
}
\newglossaryentry{sparse/dense retention}
{
    name=sparse/dense retention,
    description={refers to relatively wide or relatively tight gap width, respectively}
}
\newglossaryentry{gap width distribution}
{
    name=gap width distribution,
    description={how gap widths relate to layer deposition times}
}
\newglossaryentry{resolution}
{
    name=resolution,
    description={the width of the gap containing a pruned layer or immediately following a retained layer (may be zero in the case of two successive retained strata)}
}
\newglossaryentry{stratum retention policy algorithm}
{
    name=stratum retention policy algorithm,
    description={the decision-making procedure of which strata to prune at each time point (also referred to simply as ``policy'')}
}
\newglossaryentry{policy resolution guarantee}
{
    name=policy resolution guarantee,
    description={upper bounds on resolutions across layers of a historical record with respect to layer age and/or record depth}
}
\newglossaryentry{extant record size}
{
    name=extant record size,
    description={the quantity of strata retained within a historical record at a particular time point}
}
\newglossaryentry{extant record order of growth}
{
    name=extant record order of growth,
    description={the asymptotic scaling relationship between the extant record size and record depth}
}
\newglossaryentry{pruning enumeration}
{
    name=pruning enumeration,
    description={calculation of the set of strata to be pruned at a particular time point under a retention policy}
}
\newglossaryentry{policy enactment}
{
    name=policy enactment,
    description={the act of performing pruning enumeration and deleting strata with enumerated deposition times}
}
\newglossaryentry{update}
{
    name=update,
    description={the process of performing a deposition and applying policy enactment}
}
\newglossaryentry{update time complexity}
{
    name=update time complexity,
    description={the scaling relationship associated with the number of computational operations necessary to perform an update}
}
\newglossaryentry{founding stratum}
{
    name=founding stratum,
    description={the first stratum deposited into the historical record, the oldest stratum}
}
\newglossaryentry{newest stratum}
{
    name=newest stratum,
    description={the most recent stratum deposited into the historical record}
}
\newglossaryentry{extant record}
{
    name=extant record,
    description={the set of strata that have been retained through the policy at the present time point}
}
\newglossaryentry{extant record enumeration}
{
    name=extant record enumeration,
    description={calculation of deposition times present in the extant record at a time point under a retention policy}
}
\newglossaryentry{policy self-consistency}
{
    name=policy self-consistency,
    description={the requirement for each deposition time within an extant record enumeration to be consistently present in all extant enumerations since that deposition time}
}
\newglossaryentry{historical record}
{
    name=historical record,
    description={refers to the set of layers defined up to the current time point, also referred to as a ``record''}
}
\newglossaryentry{hereditary stratigraphic column}
{
    name=hereditary stratigraphic column,
    description={container for a historical record --- in phylogenetic applications, associated with a digital population member, also referred to as a ``column''}
}
\newglossaryentry{annotation}
{
    name=annotation,
    description={the one-to-one association of hereditary stratigraphic records with individual digital organisms to facilitate phylogenetic analysis}
}
\newglossaryentry{inheritance}
{
    name=inheritance,
    description={the act of copying the parent organism's hereditary stratigraphic column annotation and performing an update to create the offspring organism's hereditary stratigraphic column annotation during a reproduction event}
}
\newglossaryentry{inference}
{
    name=inference,
    description={best-effort estimation of historical phylogenetic relationships from extant hereditary stratigraphic columns}
}
\newglossaryentry{perfect tracking}
{
    name=perfect tracking,
    description={maintenance of an exact record of phylogenetic events during an evolutionary simulation}
}
\newglossaryentry{streaming disposal problem}
{
    name=streaming disposal problem,
    description={poses the question of how to satisfactorily maintain a temporally representative collection of stored observations on a rolling basis as new observations stream in}
}
\begin{document}

\title{ Analysis of Phylogeny Tracking Algorithms for Serial and Multiprocess Applications }
\author{
    Matthew Andres Moreno\textsuperscript{1,2,3} \thanks{Corresponding author: \texttt{morenoma@umich.edu}} \quad
    Santiago Rodriguez Papa\textsuperscript{4} \quad
    Emily Dolson\textsuperscript{4,5} \quad
}
\date{}

\newcommand{\affil}[1]{\textsuperscript{#1}}
\newcommand{\affiliations}{
\affil{1} Ecology and Evolutionary Biology, University of Michigan, Ann Arbor, United States \\
\affil{2} Center for the Study of Complex Systems, University of Michigan, Ann Arbor, United States \\
\affil{3} Michigan Institute for Data Science, University of Michigan, Ann Arbor, United States
\affil{4} Department of Computer Science and Engineering, Michigan State University, East Lansing, United States \\
\affil{5} Ecology, Evolution, and Behavior, Michigan State University, East Lansing, United States \\
}

\maketitle

\begin{center}
\affiliations
\end{center}

\begin{abstract}

% TODO we should probably use the word near optimal somewhere

% Reconstruction of phylogenetic history from natural biosequences kindles an evergreen profusion of statistical and algorithmic innovation because it is interesting and useful but especially because it is hard.
% Reconstruction of phylogenetic history from biological sequences raises a vast array of statistical and algorithmic challenges, because it is both important and difficult.
% Efforts to reconstruct phylogenetic history from biological sequences have proven a rich and enduring well for statistical and algorithmic innovation.
Since the advent of modern bioinformatics, the challenging, multifaceted problem of reconstructing phylogenetic history from biological sequences has hatched perennial statistical and algorithmic innovation.
Studies of the phylogenetic dynamics of digital, agent-based evolutionary models motivate a peculiar converse question: how to best engineer tracking to facilitate fast, accurate, and memory-efficient lineage reconstructions?
Here, we formally describe procedures for phylogenetic analysis in both serial and distributed computing scenarios.
With respect to the former, we demonstrate reference-counting-based pruning of extinct lineages.
For the latter, we introduce a trie-based phylogenetic reconstruction approach for ``hereditary stratigraphy'' genome annotations.
This process allows phylogenetic relationships between genomes to be inferred by comparing their similarities, akin to reconstruction of natural history from biological DNA sequences.
Phylogenetic analysis capabilities significantly advance distributed agent-based simulations as a tool for evolutionary research, and also benefit application-oriented evolutionary computing.
Such tracing could extend also to other digital artifacts that proliferate through replication, like digital media and computer viruses.

\end{abstract}

\clearpage
\newpage

\begin{bibunit}

\section{Introduction} \label{sec:introduction}

Biological phylogenetic history is staggeringly vast and deep.
Prokaryotes alone have a contemporary population size on the order of $10^{30}$ cells \citep{whitman1998prokaryotes}, and the phylogenetic record stretches back on the order of billions of years \citep{arndt2012processes}.
In addition to addressing questions of natural history, access to the phylogenetic record of biological life has proven informative to conservation biology, epidemiology, medicine, and biochemistry among other domains \citep{faithConservationEvaluationPhylogenetic1992, STAMATAKIS2005phylogenetics, frenchHostPhylogenyShapes2023,kim2006discovery}.

Although trifling in scale by comparison, computer simulations of evolution can generate vast histories in their own right, with common agent-based models achieving on the order of 200 million replication cycles per day \citep{ofria2009artificial}.
Distillation of this onslaught of history is necessary for data management tractability \citep{dolson2020interpreting}.
Nonetheless, existing analyses of phylogenetic structure within digital systems have already proven valuable, enabling diagnosis of underlying evolutionary dynamics \citep{moreno2023toward,hernandez2022can,shahbandegan2022untangling, lewinsohnStatedependentEvolutionaryModels2023a} and even serving as mechanism to guide evolution in application-oriented domains \cite{lalejini2024phylogeny,lalejini2024runtime,murphy2008simple,burke2003increased}.

Here, we formalize data structures and algorithmic procedures for space-efficient aggregation of phylogenetic history from evolutionary simulations on a rolling basis, and investigate their runtime performance characteristics.
% We focus in particular on asexual lineages, in which each entity has exactly one parent.
% Sexual lineages, in which entities may have more than one parent, are beyond the scope of this paper.
% They present unique challenges in phylogenetic tracking that merit methodological treatment in their own right \citep{godin2019apoget,moreno2024methods,mcphee2018detailed}.
In this paper, we focus in particular on asexual lineages, in which each entity has exactly one parent --- as opposed to sexual lineages, in which entities may have more than one parent.
Although they have historically underpinned a substantial proportion of evolutionary computation and simulation systems \citep{koza1994genetic,jefferson1990evolution} and continue to be of great interest within the community \citep{dang2018escaping}, they present unique challenges in phylogenetic tracking that merit methodological treatment in their own right \citep{godin2019apoget,moreno2024methods,mcphee2018detailed}.

%We cover two approaches: pruning of extinct lineages and coarsening of phylogenetic history.
%The former applies to direct tracking approaches and targets single-processor, serial simulation.
%The latter, in contrast, targets decentralized many-processor parallel/distributed simulation and involves a post hoc, reconstruction-based approach.

We cover two approaches: (1) tracking lineages directly using a centralized data structure and (2) a decentralized, reconstruction-based approach called hereditary stratigraphy.
The former approach allows perfect accuracy, and we show that it is efficient when targeting single-processor applications (\textit{i.e.} using serial simulation).
The latter, in contrast, was designed for decentralized many-processor parallel/distributed simulation.
It supports user-controlled trade-offs between reconstruction precision and runtime resource overhead (e.g., memory, bandwidth).
We expect that hereditary stratigraphy indeed performs best in parallel/distributed contexts.

Accompanying public-facing open source Python packages provide convenient, plug-and-play access to phylogenetic tracking methodology across simulation systems \citep{moreno2022hstrat,dolson2023phylotrackpy}.
In this manner, methods described here promise to --- and, indeed, already have --- directly enable simulation-based evolution research.
On a purely algorithmic level, procedures and, in particular, representational considerations involved in direct phylogenetic tracking pertain also to more general issues of phylogenetic computation.
% Likewise, data-management processes necessary to decentralized phylogenetic tracking relate directly to broader questions about on-the-fly binning within the domain of stream processing \citep{moreno2024algorithms}.
Likewise, data-management processes necessary to decentralized phylogenetic tracking relate directly to broader questions involving on-the-fly binning within the domain of stream processing \citep{moreno2024algorithms}.

Having motivated applications and algorithmic analyses of direct and decentralized tracking, we next present brief introductions to each methodology's operation.

\subsection{Direct Ancestry Tracking}

Most work on ancestry trees of self-replicating digital agents relies on centralized lineage tracking \citep{friggeri2014rumor,cohen1987computer,dolson2023phylotrackpy}.
% \footnote{A notable exception, \cite{libennowell2008tracing} exploit a serendipitous mechanistic opportunity to reconstruct global dissemination of chain emails.}
Direct approaches to tracking replicator provenance in digital systems operate on this graph structure directly, distilling it from the full set of parent-child relationships over the history of a population to produce an exact historical account.

Without further regard, naive complete lineage tracking performs poorly for large-scale evolutionary systems.
For long-lived simulations, practical memory limitations preclude comprehensive records of replication events, which accumulate linearly with elapsed generations and population size.
To achieve a tractable space complexity, extinct lineages can be pruned.
Within serial processing contexts, an efficient reference-counting based approach may be applied.
We discuss this technique further in Section \ref{sec:perfect-tracking-algorithm}.

\subsection{Decentralized Ancestry Tracking}

Unfortunately, computational scale --- i.e., parallel/distributed computing --- erodes the simplicity, efficiency, and effectiveness of centralized tracking.
Detecting lineage extinctions requires either (1) collation of all replication and destruction events to a centralized data store or (2) peer-to-peer transmission of extinction notifications that unwind a lineage's (possibly many-hop) trajectory across host nodes.
Both options introduce runtime communication overhead.
To boot, the perfect-tracking paradigm is fragile to data loss.
Even a single failed communication event can entirely disjoin knowledge of how large portions of phylogenetic history relate.
As data loss is ubiquitous at scale \citep{cappello2014toward,ackley2011pursue}, this limitation is of serious concern.

These concerns motivated development \textit{hereditary stratigraphy}, of an alternate, fully-decentralized approach to phylogenetic tracking \citep{moreno2022hereditary}.
This methodology uses special genome annotations, termed \textit{hereditary stratigraphic columns}.
These annotations facilitate fast, accurate post hoc inference of phylogenetic relationships among evolved agents, akin to how genetic material enables phylogenetic inference in biology.

The core mechanism of hereditary stratigraphy is accretion, and subsequent inheritance, of a new randomized data packet onto column annotations each generation.
These stochastic fingerprints, which we call ``differentia,'' serve as a sort of checkpoint for lineage identity at a particular generation.
At future time points, extant annotations will share identical differentia for the generations during which they experienced shared ancestry.
Thus, the first mismatched fingerprint between two annotations bounds the recency of their most recent common ancestor (MRCA).

To circumvent annotation size bloat, hereditary stratigraphy prescribes a ``pruning'' process to delete differentia on the fly as generations elapse.
This pruning, however, reduces precision of resulting phylogenies.
The last generation of common ancestry between two lineages can be resolved no finer than retained checkpoint times.
In the context of hereditary stratigraphy, we refer to the procedure for down sampling as a ``stratum retention algorithm'' and the resulting patterns of retained differentia as a ``stratum retention policy.''
Stratum retention algorithms must decide how many records to discard, but also how remaining records should be distributed over past time.
Tuning of stratum retention trade-offs is discussed in \citet{moreno2024algorithms} as an instance of the more general ``stream curation'' problem.
Here, we build on these results to discuss the use of hereditary stratigraphy specifically in a phylogenetic context.

% The primary additional complication that using hereditary stratigraphy for maintaining phylogenetic information adds over other use cases is the need to reconstruct the phylogeny at the end of a simulation.
A typical objective for phylogeny-oriented application of hereditary stratigraphy is synthesis of population-level ancestry history.
Although following from similar principles as pairwise MRCA estimation, procedures to reconstruct the ancestry tree of many stratigraph annotations merit some further elaboration.
Section \ref{sec:reconstruction-algorithm} provides an agglomerative, trie-based algorithm for this task.

\subsection{Outline}

Remaining exposition in this paper is structured as follows:
\begin{itemize}
% \item Section \ref{sec:methods} covers preliminaries and glossarizes key terminology,
\item Section \ref{sec:reconstruction-algorithm} contributes a recently-developed algorithm for full-tree reconstruction from hereditary stratigraphic annotation data and analyzes its runtime characteristics.
\item Section \ref{sec:perfect-tracking-algorithm} formally presents an algorithm for perfect phylogenetic tracking with analysis of its time and space complexity.
\item Section \ref{sec:discussion} discusses which situations better suit perfect tracking vs. hereditary stratigraphy.
% TODO do we need a results and discussion section???
% \item Section \ref{sec:results-and-discussion}
\item Section \ref{sec:conclusion} reflects on broader implications and future work.
% \item we include a Glossary of terminology related to hereditary stratigraphy, streaming curation, and phylogenetics in the Appendix.
\end{itemize}

\section{Phylogenetic Inference Algorithms} \label{sec:reconstruction-algorithm}

This section describes methodology to reconstruct patterns of descent among hereditary stratigraphic annotations.  % TODO: pass for vocabulary consistency (annotation vs column)
For an analysis of the time and space complexity of propagating the annotations at simulation runtime, see \citep{moreno2024algorithms}.
Sections \ref{sec:distance-based-reconstruction} and \ref{sec:trie-based-reconstruction} present a naive and a more apt approach, respectively, to inferring hierarchical relatedness (i.e., phylogeny) among hereditary stratigraphic annotations.
Phylogenetic reconstruction requires synthesis of relatedness relationships across an entire population to assemble a holistic historical account.  % TODO: put population in the glossary
To lay groundwork for this discussion, Section \ref{sec:pairwise-relatedness} first explores relatedness estimation between individual pairs of hereditary stratigraphic annotations.

\subsection{Pairwise Relatedness}
\label{sec:pairwise-relatedness}
% @ELD suggests: provide a quick intro to hereditary stratigraphy (e.g. with a figure) here first
Recall that hereditary stratigraphic annotations consist of a sequence of inherited ``checkpoint'' fingerprint differentia.
These sequences have a new randomly-generated differentia appended each generation.
Independent lineages accrue probabilistically-distinct differentia.
Relatedness estimation between annotations derives from a simple principle: mismatching differentia values at a time point indicates divergence of any two annotations' lineages.

% @ELD suggests: TODO add formal algorithm description
Assuming both annotations employed the same streaming curation policy algorithm, if they share identical record depth they will have retained strata from identical time points.
What if one annotation has greater record depth (i.e., more generations elapsed) than the other?
We can truncate any strata beyond the depth of the younger annotation --- we already know no common ancestry will be shared at those time points.
Due to the self-consistency requirements of streaming curation, the younger annotation's curated time points will now be a superset of those of the older annotation.%
\footnote{During its excess stratum accumulation, the older annotation may have discarded some time points.}
So, we will search for the first mismatching stratum among the deeper annotation's early time points.

The possibility of spurious collisions between differentia (i.e., identical values by chance) complicates any application of binary search to identify the earliest time point with mismatched strata.
Consider lineage divergence as a boolean predicate: it evaluates false for all strata before some threshold of true lineage divergence and then true for all strata after.
Spurious collisions introduce the possibility of false negatives into search for this predicate's satisfaction threshold.
Take $c$ as retained stratum count per annotation.
If probabilistic confidence were acceptable, sufficient differentia could be tested at each binary search step to bound the probability of misidentifying the point of divergence.
The net failure rate depends on the number of opportunities for false negative divergence detections.
In the worst case, this number of opportunities will be $\mathcal{O}(\log c)$.
However, absolute certainty in determining the earliest differentia discrepancy will require worst case $\mathcal{O}(c)$ comparison (i.e., all differentia pairs when two annotations share no mismatching differentia).

Spurious collisions introduce false relatedness, causing a second complication: systematic overestimation of relatedness.
Expected bias can be readily calculated, as spurious collision probability stems from the number of unique differentia values.
As such, expected bias may be subtracted out to satisfy statistical analyses requiring mean-unbiased relatedness estimation.

\subsection{Distance-based Reconstruction}
\label{sec:distance-based-reconstruction}

The ease of calculating pairwise relatedness lends a straightforward option for whole-tree estimation: distance-based tree construction methods.
Such methods, like neighbor joining and UPGMA \citep{peng2007distance}, operate simply on pairwise distance estimates between taxa.
This distance-based approach was used in early work with hereditary stratigraphy \citep{moreno2022hereditary}, and is packaged within the acompanying \texttt{hstrat} library.

All-pairs comparisons necessary to prepare the distance matrix make such reconstructions at least $\mathcal{O}(c n^2)$, with $n$ as population size and $c$ as retained stratum count per annotation.
As will be shown presently, better results can be achieved by working directly with hereditary stratigraphic annotations' underlying structure.

\subsection{Trie-based Reconstruction}
\label{sec:trie-based-reconstruction}

% @ELD A figure here (and maybe also a formal algorithm writeup) would help a lot
The objective of phylogenetic reconstruction can be interpreted as production of a tree structure where leaf nodes share a common path from the root for the duration of their common ancestry.
Anatomically, hereditary stratigraphic annotations share common differentia up through the end of common ancestry.
Restated, annotations share a common prefix until the point of lineage divergence.
This observation suggests application of a trie data structure \citep{fredkin1960trie} to phylogenetic reconstruction.

As a preliminary simplification to be relaxed later, assume our population of $n$ annotations share consistent record depth.
Assuming identical retention policy algorithms, annotations will then also have consistent retained stratum count $c$.
Phylogenetic reconstruction through trie building follows as $\mathcal{O}(c n)$ \citep{mehta2018handbook}.
Such a trie building procedure corresponds each differentia within an annotation to a trie node.
The first annotation to be added unfolds its differentia into chronological linked list from most to least ancient.
Subsequent annotations agglomerate onto the trie by tracing down from the most-ancient root along the path its differentia sequence matches to, and forking off at the point the next-to-add differentia value is not available in the existing trie.
As an anecdotal reference for computational intensity, consider recent work using the Python-based \texttt{hstrat} library trie-building implementation \citep{moreno2023toward}.
This work achieved reconstructions over a population of 32,768 ($2^15$) synchronous $\approx 1,200$ stratum annotations within about five minutes wall time.

Two reconstruction biases should be noted.
First, because spurious differentia collisions introduce bias toward overestimation of relatedness, as noted above in Section \ref{sec:pairwise-relatedness}, branching events in the reconstructed tree will --- on average --- be more recent than in reality.
The expected rate of spurious collision is easily predictable, so this bias can be readily modeled and corrected for.
% Another possible approach to counteract overrelatedness bias when analyzing tree structure would be Monte Carlo sampling of tree space with sets of inner nodes ``unzipped'' as if they had arisen due to spurious collision.
Second, trie reconstruction can overrepresent polytomies (i.e., internal multifurcations).
Branches that may have unfolded as separate events but fall within the same uncertainty gap within annotations' historical records will all lump together into a single polytomy.
This bias can be counteracted by splitting polytomies into arbitrary bifurcations with zero-length edges.
% TODO talk about completely eliminating polytomies bad and different bifurcations non-equivalent?

Allowing for uneven record depth among annotations complicates trie-based reconstruction.
As described in Section \ref{sec:pairwise-relatedness}, time points retained within deeper annotations effectively subset time points within younger annotations.
Thus, arranging youngest-first insertion order for trie construction ensures that as progressively deeper annotations are added, no inserted annotation retroactively injects a trie node between existing nodes.
In contrast, under deepest-first order, a younger inserted annotation may contain a time point not present among its lineage path in the trie (i.e., the older annotations having discarded it under streaming curation.
We therefore choose youngest-first insertion order to preclude scenarios where it would be necessary to splice a split away tendril back into the trie.

Under youngest-first order, an insertion may reach a trie position where the time point of the focal annotation's next-to-insert differentia skips beyond the time point of a next-to-consider child trie node.
The inserted annotation's true differentia value at that next-considered time point, having been discarded, is unknown.
This missing information is conceptually equivalent to a query string with a wildcard position \citep{fukuyama2016partial}.
Such wildcard queries can require evaluation of many branch paths during insertion, instead of just one.
An inserted taxon's lineage could proceed along any of the outgoing edges from the trie node preceding the wildcard.
Among possible paths, the path with the longest successive streak of strata consistent with the inserted annotation is best-evidenced.
Unfortunately, in the case of multiple wildcard positions, finding the best-evidenced path requires exploring a number of alternate paths potentially exponential with respect to maximum stratum count $c$ (i.e., maximum trie depth).
Taking the number of possible differentia values $d$ into account as the maximum branching factor, the worst case time complexity devolves to $\mathcal{O}(c d^c n)$ \citep{fukuyama2016partial}.
Calculation of the average case depends on the streaming curation policy algorithm at play and the underlying phylogenetic structure being reconstructed, which introduces analytical complexity and likely varies significantly between use cases.
Fortunately, the number of wildcard sites is limited due to: 1) record depth similarity, and 2) the tendency for there to be relatively few deep branches in most phylogenies, due to coalescence \citep{nordborgCoalescentTheory2019, berestyckiRecentProgressCoalescent2009}.
These factors likely reduce time costs.
Experimental performance evaluation with annotations derived from representative phylogenies will be warranted to explore the in-practice run time of wildcarding pruned-away strata.

% TODO patch in comparison to "real life" phylogenetic reconstruction somewhere?
% Althouh statistically-optimal reconstructions are generally NP-hard \citep{giribet2007efficient}, sub-quadratic heuristics are possible \citep{truszkowski2013fast}.

\section{Perfect Tracking Algorithm} \label{sec:perfect-tracking-algorithm}

Traditionally, the problem of recording the phylogenies of replicating digital populations has been solved using a ``perfect tracking'' approach where all replication events are recorded and stored in a tree (or forest, if there are multiple roots) \citep{dolson2023phylotrackpy}.
In this section, we present results on the time and space complexity of perfect tracking;
these results will serve as a basis of comparison for hereditary stratigraphy's curation policy and reconstruction algorithms in Section \ref{sec:discussion}.

Perfect tracking algorithms are designed to augment an ongoing replication process by performing supplementary record-keeping.
As such, perfect tracking algorithms execute three separate routines: 1) initialization, 2) handling the birth of new individuals, and 3) (optionally) handling the removal of existing individuals.
The initialization step happens once before the founding set of population members is introduced.
The birth handling routine executes any time an individual is replicated.
Similarly, the removal-handling routine gets called any time an individual is removed from the set of currently active individuals (i.e., the population).

Perfect tracking can be used regardless of whether each individual has one or multiple parents (i.e., asexual vs. sexual reproduction).
Our analysis here will focus on the asexual scenario where each individual has only parent, but most of our results generalize to the multi-parent case.
The primary exception is the results presented in Section \ref{sec:perfect-tracking-pruning-space}, which will underestimate memory use in the average multi-parent case.

Perfect tracking can be done at an arbitrary level of abstraction \citep{dolson2020interpreting}.%, dolson2023phylotrackpy}.
For example, individual population members could be grouped into more abstract taxonomic units on the basis of a particular genetic or phenotypic trait.
As in biological phylogenetics, ancestry trees can show relationships between individuals, species, or any other taxonomic unit.
Increasing the level of abstraction decreases memory use by a constant (but often large) factor.
Here, we will assume that the entities being tracked are the underlying replicating individuals, as this case is the most computationally expensive (i.e., worst case).

In the following subsections, we present two different algorithms for perfect phylogenetic tracking: one which naively accumulates records for every individual that ever existed and another which reclaims memory as lineages go extinct.

\subsection{Naive perfect tracking}
\label{sec:naive-perfect-tracking}

The naive perfect tracking algorithm is fairly straightforward.
It is formally described in Algorithm \ref{alg:perfect-tracking}.

\renewcommand{\algorithmiccomment}[1]{ \newline \textbf{comment:} \textit{#1}}
\renewcommand{\algorithmicrequire}{\textbf{Input:}}

\begin{algorithm}
    \caption{Perfect phylogenetic tracking}
    \label{alg:perfect-tracking}
    \textbf{Part 1: Initialization}
    \begin{algorithmic}[1]
    \State{Define $T:=$ an empty forest}
    \end{algorithmic}
    \textbf{Part 2: Handle birth event}
    \begin{algorithmic}[1]
        \Require{$P_i$ = newly created organism, $P_j$ = $P_i$'s parent}
        \State{Add node representing $P_i$ to $T$}
        \State{Add edge in $T$ from $P_j$ to $P_i$}
    \end{algorithmic}
\end{algorithm}

\subsubsection{Time complexity}

The naive perfect tracking algorithm can be implemented in constant time (see Theorem \ref{thm:perfect-tracking-time}).

% @ELD Could make this theorem better
\begin{theorem}{Naive Perfect Tracking Time Complexity}
\label{thm:perfect-tracking-time}
The naive perfect tracking algorithm can be implemented in constant time ($\mathcal{O}(1)$) per birth event.
\end{theorem}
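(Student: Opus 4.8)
The plan is to analyze directly the two operations prescribed by the birth-handling routine of Algorithm~\ref{alg:perfect-tracking} and to exhibit a concrete data-structure realization under which each executes in constant time. The routine performs exactly two actions per birth event: adding a node representing the new organism $P_i$ to the forest $T$, and adding an edge from the parent $P_j$ to $P_i$. Since there is a fixed number (two) of such primitive operations, it suffices to show that each is $\mathcal{O}(1)$; the per-event cost is then a constant sum of constants, independent of how many births have already occurred.

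First I would pin down the representation of $T$. I would store each tree node as a record holding (at minimum) a reference to its parent node, and maintain a direct association between each live organism and its corresponding node --- for instance, by storing a pointer to the node inside the organism object itself, or equivalently via a hash map from organism identifiers to nodes. Under this representation, allocating and initializing the new node for $P_i$ is a constant-time operation. For the edge insertion, the critical requirement is $\mathcal{O}(1)$ access to the parent's node given $P_j$, which the organism-to-node association supplies directly. Recording the edge then reduces to setting $P_i$'s parent pointer (and, if child lists are also maintained, appending $P_i$ to $P_j$'s child list, which is amortized constant time for a dynamic array or worst-case constant for a linked list).

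The only real subtlety --- and hence the step I expect to be the main obstacle to making the argument airtight --- is justifying the constant-time parent lookup, since a general query ``locate the node for organism $P_j$'' need not be $\mathcal{O}(1)$ without appropriate bookkeeping. I would make explicit that the organism-to-node handle is established at the moment each organism is created (it is simply the node allocated during that organism's own birth event), so the handle is already available when the organism later acts as a parent, with no search required. If one instead prefers the hash-map formulation, I would note that the $\mathcal{O}(1)$ claim holds in the expected/amortized sense standard for hashing. Either way, no operation in the routine iterates over the existing nodes or edges of $T$, so the per-birth cost is independent of the current size of the forest, establishing the $\mathcal{O}(1)$-per-birth-event bound.
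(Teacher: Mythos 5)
Your proposal is correct and follows essentially the same route as the paper's proof: count the fixed number of operations in the birth-handling routine and observe that each is constant-time, so the per-event cost is $\mathcal{O}(1)$. You go one step further than the paper --- which simply invokes the RAM model and calls the two operations ``simple'' --- by explicitly justifying the constant-time parent lookup via a stored organism-to-node handle, a detail the paper's proof leaves implicit.
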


\begin{proof}
\label{prf:perfect-tracking-time}
For the purposes of this proof, we use the RAM model of computation. 
Initialization takes a single simple operation and only happens once. 
Thus, it is trivially constant-time.
Handling a birth event takes two simple operations.
Thus, each run of the birth-handling algorithim trivially takes constant time.
The birth-handling algorithm will run once per birth event.
\end{proof}

Note that, although the number of birth events is likely not constant with respect to the size of the population or the number of generations,
these birth events are not part of the tracking algorithm and would have happened regardless of whether phylogeny tracking was in place.
Thus, we need only consider time complexity per birth event.

\subsubsection{Performance in parallel and distributed environments}
% @ELD Could do a formal analysis with the PRAM model here
\label{sec:perfect-tracking-distrbuted}

Naive perfect tracking can be adapted to parallel computing environments using universally unique agent identifiers through either of two straightforward options.
In the first approach, records of all agent creation and elimination events are collected to a centralized database as they occur.
In the second, each processor maintains an independent data file of local agent creation/elimination that are stitched together in postprocessing.
For the former, race conditions might occur where the phylogeny tracker is notified of a birth event involving a parent it does not yet know about.
This situation can be mitigated fairly cheaply by having a pool of ``waiting'' nodes that will be added once the appropriate parent node is added.

For both approaches to naive perfect tracking, bigger problems occur in distributed environments, where passing messages to a centralized data structure could become expensive \citep{moreno2022hereditary} or limited storage space is available to individual processor elements.
Worse still for either approach to naive perfect tracking is the possibility of data loss, which can become a significant likelihood at scale \citep{cappello2014toward}.
It is unclear how a perfect tracking algorithm could recover from missing data.

\subsubsection{Space complexity}

Naive perfect tracking has a gargantuan space complexity, owing to the fact that it maintains a tree containing a node for every object that ever existed (see Theorem \ref{thm:perfect-tracking-space}).

\begin{theorem}{Naive Perfect Tracking Space Complexity}
\label{thm:perfect-tracking-space}
The space complexity of the naive perfect tracking algorithm is $\mathcal{\theta}(\sum_{t=0}^{G} N_t$), where $N_t$ is the number of new objects born at time $t$ and $G$ is the number of time steps.
\end{theorem}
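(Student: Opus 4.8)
The plan is to establish the claimed $\theta$-bound by proving matching upper and lower bounds on the number of nodes the tree stores, since in the RAM model the space complexity is dominated by one unit of storage per node (plus one edge per non-root node, which is asymptotically the same count). First I would observe that the naive algorithm, as given in Algorithm \ref{alg:perfect-tracking}, never deletes nodes: initialization creates an empty forest, and the only other routine is birth-handling, which adds exactly one node per birth event. Therefore the total number of nodes ever present at the final time step $G$ is exactly the total number of birth events that have occurred, namely $\sum_{t=0}^{G} N_t$, where $N_t$ counts the objects born at time $t$.

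The key steps, in order, are as follows. First I would formalize the bookkeeping: let $M = \sum_{t=0}^{G} N_t$ denote the cumulative count of objects ever born through time $G$. Second, I would argue the upper bound $\mathcal{O}(M)$: since each birth adds one node and one edge (the edge from parent to child), and no removal routine exists in the naive variant, the data structure holds at most $M$ nodes and at most $M$ edges, each requiring $\mathcal{O}(1)$ storage in the RAM model, giving $\mathcal{O}(M)$ total. Third, I would argue the lower bound $\Omega(M)$: because nothing is ever pruned, \emph{every} one of the $M$ born objects retains a distinct node in $T$ for the entire run, so the structure must occupy at least $M$ units of space. Combining the two directions yields $\theta(M) = \theta\!\left(\sum_{t=0}^{G} N_t\right)$, as claimed.

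I do not expect a serious obstacle here, as the argument is essentially a direct consequence of the absence of any deletion step. The one point meriting care is making explicit why the lower bound is tight rather than merely an upper bound: one must note that the naive algorithm is defined to retain all nodes regardless of extinction, so no born object's node is ever reclaimed, which is precisely what distinguishes this result from the pruning variant analyzed later in Section \ref{sec:perfect-tracking-pruning-space}. A secondary subtlety is confirming that edges contribute the same order as nodes (each non-root node has exactly one parent edge in the asexual case), so accounting for edges does not change the bound; this is where the asexual assumption enters, and it is worth stating that the multi-parent case would only inflate the edge count by a constant factor per birth and hence leave the order of growth unchanged.
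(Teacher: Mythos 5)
Your proposal is correct and takes essentially the same approach as the paper's proof: one node added per birth event, no deletions ever occur, hence the node count equals $\sum_{t=0}^{G} N_t$. Your version is simply more explicit than the paper's --- separating the upper and lower bounds and accounting for edges --- but the underlying argument is identical.
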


\begin{proof}
\label{prf:perfect-tracking-space}
The birth-handling algorithm adds a node to the tree ($T$) each time it is called.
It is called once for every birth event.
No nodes are ever removed from $T$.
Therefore, the number of nodes in $T$ is the total number of nodes ever created.
\end{proof}

\subsection{Pruning-enabled perfect tracking}
\label{sec:naive-perfect-tracking-with-pruning}

For most substantive applications, the amount of memory used by naive perfect tracking is prohibitive.
As such, it is common to use a ``pruning''-enabled algorithm that deletes records associated with extinct lineages (see Algorithm \ref{alg:perfect-tracking-pruning}). % TODO CITE \citep{dolson2023phylotrackpy}
This procedure removes all parts of the phylogeny that are no longer ancestors of currently extant objects.
Thus, like naive perfect tracking, it introduces no uncertainty into down-stream analyses.

\begin{algorithm}
    \caption{Perfect phylogenetic tracking with pruning}
    \label{alg:perfect-tracking-pruning}
    \textbf{Part 1: Initialization} \\
    Same as algorithm \ref{alg:perfect-tracking} \\
    \textbf{Part 2: Handle birth event}
    \begin{algorithmic}[1]
        \Require{$P_i$ = newly created organism, $P_j$ = $P_i$'s parent}
        \State{Add node representing $P_i$ to $T$}
        \State{Add edge in $T$ from $P_i$ to $P_j$}
        \State{Record that $P_i$ is alive}
        \State{Increment $P_j$'s count of living offspring lineages}
    \end{algorithmic}
    \textbf{Part 3: Handle removal event}
    \begin{algorithmic}[1]
        \Require{$P_i$ = newly removed organism}
        \State{$P_j$ $\gets$ parent of $P_i$}
        \State{Record that $P_i$ is not alive}

        \While{$P_i$ is not alive \textbf{and} $P_i$'s count of living offspring lineages == 0}
            \State{Remove $P_i$ from $T$}
            \State{Decrement $P_j$'s count of living offspring lineages}
            \State{$P_i$ $\gets$ $P_j$}
            \State{$P_j$ $\gets$ parent of $P_j$}
        \EndWhile

    \end{algorithmic}
\end{algorithm}

\subsubsection{Time Complexity}

The pruning-enabled algorithm has amortized constant time complexity (see Theorem \ref{thm:perfect-tracking-with-pruning-time}).

\begin{theorem}{Pruning Time Complexity}
\label{thm:perfect-tracking-with-pruning-time}
The time complexity of pruning in perfect phylogenetic tracking is $\mathcal{O}(1)$, amortized. 
\end{theorem}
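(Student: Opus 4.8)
The plan is to first recognize that the per-removal cost is \emph{not} worst-case constant: a single removal event can send the while loop walking up an arbitrarily long chain of ancestors, deleting each one, so a lone removal can cost $\mathcal{O}(G)$ in the worst case. The theorem therefore must be read in the \emph{amortized} sense, and I would prove it via the aggregate (equivalently, accounting) method. The crux is a simple conservation fact about nodes, which I would establish before touching the loop.

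First I would state and justify the structural invariant that each node enters $T$ exactly once --- at the birth event that creates it --- and is deleted from $T$ at most once over the entire execution. This holds because births only ever add freshly created organisms as new nodes, while the removal routine only ever deletes existing nodes, and no mechanism re-inserts a deleted node. Hence the lifetime deletion count of any given node is at most one.

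Second, working in the same RAM model used for Theorem \ref{thm:perfect-tracking-time}, I would show that a single iteration of the while loop performs only a constant number of primitive operations: evaluating the loop guard, removing one node from $T$, decrementing one living-offspring counter, and reassigning the two pointers $P_i$ and $P_j$. The key point is that each iteration deletes exactly one distinct node. Combined with the invariant above, the total number of while-loop iterations executed across \emph{all} removal events throughout the run is bounded by the total number of nodes ever created, i.e., by the total number of birth events.

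Finally I would close the amortized bound. The part of the removal routine outside the loop (fetching the parent, marking $P_i$ not alive) is $\mathcal{O}(1)$ per call, and I would charge each loop iteration's $\mathcal{O}(1)$ cost to the birth of the unique node it deletes; since each node is born once, each birth absorbs at most one such charge. Summing, the aggregate pruning cost over the whole execution is $\mathcal{O}(B + R)$, where $B$ and $R$ denote the numbers of birth and removal events, and because $R \le B$ this is $\mathcal{O}(B)$, giving $\mathcal{O}(1)$ amortized per event. The main obstacle --- and really the only subtlety --- is precisely the cascading while loop: one must resist inferring non-constant cost from the unbounded worst-case chain length, and instead anchor the analysis to the fact that deletions are globally bounded by insertions, so every cascade can be ``paid for'' in advance at birth.
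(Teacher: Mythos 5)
Your proof is correct, and it takes a genuinely different (though related) route from the paper's. The paper proceeds by a three-way case analysis of removal events --- (1) the removed taxon has living descendants, (2) it alone can be pruned, (3) its removal triggers a cascade --- and then amortizes \emph{locally along each chain}: for a case-3 cascade spanning $d_t - d_a$ ancestors, each of those ancestors must previously have been removed via a constant-time case-1 operation (they remained in $T$ precisely because they still had living descendants when removed), so the chain's total cost $2(d_t - d_a) + 1$ spread over its $d_t - d_a + 1$ removal events tends to the constant $2$. You instead use the global aggregate method: the invariant that each node is inserted once and deleted at most once bounds the lifetime number of while-loop iterations by the number of births, and charging each deletion to the corresponding birth closes the bound. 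Your version buys cleanliness and rigor --- it needs no case analysis, handles arbitrary interleavings of partial cascades without tracking which prior removals were case 1, and avoids the paper's slightly informal limit-of-a-ratio step; the paper's version buys an explicit amortized constant ($\le 2$ per removal) and attributes all cost to removal events, matching the theorem's ``pruning'' framing. One small refinement worth noting for your scheme: because every node deleted in a cascade (other than the currently removed taxon) was itself the subject of an earlier removal event, total loop iterations are bounded by $R$ as well as by $B$, so your argument also yields $\mathcal{O}(1)$ amortized per removal event alone, not just per event overall.
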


\begin{proof}
\label{prf:perfect-tracking-with-pruning-time}
The initialization and birth-handling steps of the pruning algorithm are trivially constant-time, as explained in proof \ref{prf:perfect-tracking-time} (technically the pruning algorithm adds two constant-time operations to the birth-handling step, but this fact is inconsequential).

Next, we prove that the removal-handling pruning algorithm runs in amortized constant time.
When an arbitrary taxon $t$ is removed, there are three possibilities: 1) $t$ has living descendants, 2) $t$ has no living descendants, but $t$'s parent is either alive or has other descendants that are alive, or 3) $t$ was the last living descendant of its parent.
In case 1, nothing can be removed from the phylogeny.
Case 1 takes contant time, because it only requires executing a single comparison operation.
In case 2, only $t$ can be removed from the phylogeny.
Case 2 also takes constant time, as it only requires removing a single node from the tree and doing three comparison operations.

In case 3, pruning must be done.
We recurse back up the lineage until we find a taxon, $a$, that is either alive or has living descendants. 
Let the distance between $t$ and the root of the tree be $d_t$ and the distance between $a$ and the root of the tree be $d_a$.
The pruning operation, then, takes $\mathcal{O}(d_t - d_a)$ steps. 
In the worst case, this value will be equal to the number of elapsed generations.

However, for case 3 to occur, all taxa from $t$ to $a$ (including $a$ but not including $t$) must have already been removed.
Consequently, $d_t - d_a$ case 1 removal operations must happen for every case 3 removal operation.
Thus, the amortized time to remove the sequence of taxa from $t$ to $a$ (including both $t$ and $a$) is:

\[
\frac{2(d_t - d_a) + 1}{d_t - d_a + 1}
\]

Since

\[
\lim_{d_t - d_a\to\infty} \frac{2(d_t - d_a) + 1}{d_t - d_a + 1} = 2
\],

the amortized time complexity of pruning is bounded by a constant, i.e., it is $\mathcal{O}(1)$.

\end{proof}

\subsubsection{Performance in parallel and distrbuted environments}
\label{sec:perfect-tracking-pruning-distrbuted}

% @MAM interesting thought: you can't prune all unifurcations when taking the second (distributed) approach to pruning-enabled perfect tracking
% @MAM another interesting thought: hstrat annotations can be smaller than UUID labels (128 bits)
% TODO: Matthew, double-check this @MAM this is completely reasonable
Like naive perfect tracking, pruning-enabled perfect tracking can proceed via either a fully-centralized database or through localized record-keeping.

Centralized pruning-enabled perfect tracking has all the same issues in parallel and distributed environments that naive perfect tracking has.
It also adds an additional cost: parallel processes must be synced up before a pruning operation can be performed.
Otherwise, the algorithm risks incorrectly pruning a lineage that it will later learn produced a new artifact.
This requirement would translate to a substantial speed cost in practice.

In a localized model, when the last remnant of an immigrant lineage is eliminated an extinction notification would need to be propagated back to the processing element that lineage immigrated from.
Records for lineages that extinct locally but had generated emigres to other processing elements would need to be retained until extinction notifications for all emigres are confirmed.
Even for lineages that span few processing elements, repeat migrations between a single pair of processors could wind up long breadcrumb trails that necessitate extensive cleanup operations.

\subsubsection{Space complexity}
\label{sec:perfect-tracking-pruning-space}

As with naive (i.e., unpruned) perfect tracking, the primary memory cost of pruned perfect tracking comes from maintaining the tree of records, $T$.
Thus, our analysis in this section will focus on the size of $T$.
Technically, the worst-case size order of growth of perfect tracking with pruning is the same as for naive perfect tracking (see Theorem \ref{thm:perfect-tracking-space}).
However, whereas that order of growth is also the best case for naive perfect tracking, it is a somewhat pathological case under pruning.
When using pruning, such a case would only occur when every object is copied before being removed (i.e., no lineages go extinct).
There are some legitimate reasons these cases might occur, but most of them are fairly esoteric (see Section \ref{sec:perfect-tracking-space-supp}).

% These scenarios might legitimately occur if objects are never removed or the number of objects is allowed to continuously grow, but in these cases $N$ would also be growing continuously.
% They could potentially also happen when tracking digital objects that are copied with very small amounts of stochasticity.

With pruning, the asymptotic behavior of the average case turns out to be substantially better than that of the worst case.
Luckily, in the vast majority of applications, the average case is more practically informative than the asymptotic behavior of the worst case.
Here, we investigate the expected size of $T$ (which we will call $E(|T|)$) in the average case as the population size ($N$) and number of time points ($G$) increase.
Of course, investigating the average case requires making some assumptions about the scenarios where this algorithm may be used.
An expedient assumption to make is that objects to be replicated are selected at random from the population (i.e., drift).
This process is described by a neutral model.
While this assumption is obviously not completely true in most cases, it will turn out that $E(|T|)$ is actually smaller for most realistic cases (e.g. any evolutionary scenario with directional selection).

Conveniently, the asymptotic behavior of $E(|T|)$ as $N$ increases has been studied extensively by evolutionary biologists under a variety of random selection models \citep{berestyckiRecentProgressCoalescent2009, tellierCoalescenceMultipleBranching2014, nordborgCoalescentTheory2019}.
The relevant branch of mathematics is coalescence theory, which describes the way particles governed by random processes agglomerate over time.
There are a variety of models of random selection, all of which yield different behavior.

Neutral models that could plausibly describe a process of replicating digital artifacts experience ``coalescence'' events.
In these events, the population establishes a new, more-recent common ancestor as a result of old lineages dying out.
This observation implies that when we run our perfect tracking with pruning algorithm for sufficiently long periods of time, on average we should expect our tree $T$ to have a long unifurcating ``stem.''
The most recent node in this chain will be the most recent common ancestor (MRCA) of the whole extant population.
The descendants of the MRCA will form a sub-tree, which we will call $T_c$.

The expected size of $T_c$ (denoted $E(|T_c|)$ here), then, will be the dominant factor affecting $E(|T|)$.
$E(|T_c|)$ is closely related to a value commonly studied in coalescence theory literature called the expected total branch length (usually denoted $L_n$ or $T_{tot}$; we will refer to it as $L_n$).
$L_n$ assumes that the phylogeny is represented by a tree, $T'$, that contains the leaf nodes, MRCA, and any internal nodes from which multiple lineages diverge (i.e., no unifurcations).
Edges in this tree have weights that represent the amount of time between the origin of the parent and the origin of the child.
$L_n$ is the sum of these weights.
Although the asymptotic behavior of $L_n$ depends on specifics of the evolutionary process, it scales sub-linearly with respect to population size for all realistic models that have been investigated in the literature to our knowledge \citep{gnedinLcoalescentsSurvey2014}.%
\footnote{A possible exception is spatial models, which have received less formal analysis but are known to exhibit slower coalescence \citep{berestyckiRecentProgressCoalescent2009}}
How is this possible?
Under realistic neutral models, various subsets of the population will coalesce to common ancestors much faster than the whole population does \citep{nordborgCoalescentTheory2019}.
Consequently, there are many shallow branches.
The farther back in $T_c$ you go, the fewer branches there are.
As a result, the marginal impact of increasing $N$ diminishes as $N$ increases.

What do these observations tell us about $E(|T_c|)$?
% @ELD Figure here would deifnitely be a good idea
Because $T_c$ contains unifurcations, each edge in $T'$ corresponds to a chain of unifurcating nodes in $T_c$.
The length of this chain in $T_c$ will, on average, be proportional to the weight of the corresponding edge in $T'$.
Thus, the expected value of $L_n$ is almost identical to $E(|T_c|)$.
However, there is one important difference.
The reason $T'$ is able to scale sub-linearly with $N$ is that as $N$ goes to infinity the lengths of the branches to the leaf nodes (i.e., the extant population) approach zero \citep{nordborgCoalescentTheory2019, delmasAsymptoticResultsLength2008, drmotaAsymptoticResultsConcerning2007}.
In contrast, the space taken up by each of the $N$ leaf nodes in $T_c$ is a fixed constant and cannot drop to zero.
Thus, $E(|T_c|)$ is $\mathcal{O}(L_n + N)$.
Note that the time to coalescence depends entirely on the population size ($N$); the total number of generations elasped since the start of the process ($G$) does not affect the asymptotic growth rate of $E(|T_c|)$.

The precise asymptotic scaling of $L_n$ with respect to population size $N$ depends on the exact model of random replication used \citep{tellierCoalescenceMultipleBranching2014}.
Under the best-studied model, Kingman's Coalescent, it is $\mathcal{O}(log(N))$ \citep{kingmanCoalescent1982, delmasAsymptoticResultsLength2008}.
Under another reasonable model, the Bolthausen-Sznitman coalescent, it is $\mathcal{O}(\frac{N}{log(N)})$ \citep{drmotaAsymptoticResultsConcerning2007}.
% Some exotic models could also produce the star-shaped coalescent, in which all $N$ offspring descend directly from the most recent common ancestor.
% While this scenario actually takes less memory in total, its asymptotic growth is $\mathcal{O}(N)$.
Because these growth rates are all less than $N$, $E(|T_c|)$ ends up being dominated by $N$ term.
Thus the overall memory usage of pruning-enable perfect tracking is $\mathcal{O}(N + G)$.
Trivially, the ``stem'' part of $T$ can be pruned off when coalescence occurs, producing a bound of $\mathcal{O}(N)$.

%TODO: This could move to supplement if we need space
While $L_n$ is not the primary factor determining the memory use of this algorithm, it still has a substantial practical impact on resource usage.
Thus, it is worth briefly exploring our practical expectations of its size.
Earlier, we mentioned that in practice $T$ will usually be smaller than predicted by neutral models.
These models represent a scenario called ``neutral drift'' in evolutionary theory.
Under drift, coalescence can be slow because it needs to happen entirely by chance.
When some members of the population are more likely to be selected than others and that selective advantage is heritable (assuming population size $N$ is stable), we expect to see ``selective sweeps'' in which those members of the population outcompete others.
A selective sweep will usually speed up coalescence (unless it is competing with a different simultaneous selective sweep) \citep{neherGeneticDraftSelective2013}.
Thus, in general, in the presence of non-random selection, we expect $T_c$ to be much smaller than we would expect under drift.
Note, however, that the Bolthausen-Sznitman coalescent mentioned above does model scenarios with selection.

Would we ever expect $T_c$ to be larger than under drift?
Only if lineages coexist for longer than we would expect by chance.
Such a regime can theoretically occur when there is selection for stable coexistence or diversity maintenance.
However, such a regime also involves introducing non-random (balancing/stabilizing) selection.
Thus, in practice, while these regimes maintain multiple deep branches, there is often frequent coalescence within those branches.
Theoretically, though, extreme pressure for diversity could cause $E(|T|)$ to scale faster than $\mathcal{O}(N)$.
Such scenarios could potentially be created via ecological interactions.  % TODO cite selection schemes as ecologies?

One factor that we have glossed over in this analysis is the possibility for $N$ to change over time.
For reasons that are explored further in Section \ref{sec:perfect-tracking-space-supp}, allowing population size $N$ to fluctuate complicates mathematical analysis but does not substantially alter our conclusions.

\section{Discussion} \label{sec:discussion}

In choosing between perfect tracking and hereditary stratigraphy, there are multiple factors to consider: 1) time complexity in the relevant model of computation, 2) memory use, and 3) amount of information retained.

\subsection{Time complexity}

In a non-distributed, serial computing environment, perfect tracking runs in constant time.
The time complexity of hereditary stratigraphy depends on the retention policy used, but scales at least linearly with annotation size.
So, if annotation size growth is allowed to maintain static reconstruction resolution guarantees in long-running processes, time complexity of perfect tracking will scale preferably to hereditary stratigraphy.
A lower bound on hereditary stratigraphy's time complexity comes from copying of annotation material from parent to child when new artifacts are born.
Consequently, the exact time complexity depends on the curation policy but can scale with number of generations $G$ elapsed, unless a policy with a constant-cap order of growth is used.
Note that hereditary stratigraphy incurs an additional one-time complexity cost due to the fact that its output generally requires post-processing before analysis (i.e., phylogenetic reconstruction).
Assuming comparable generational depth across lineages, reconstruction can be achieved with time complexity linear to annotation size and linear to tree tip count (which can optionally be held below population size through subsampling).

In contrast, in parallel or distributed environments, perfect tracking suffers from the problems discussed in Sections \ref{sec:perfect-tracking-distrbuted} and \ref{sec:perfect-tracking-pruning-distrbuted}.
Thus, hereditary stratigraphy usually runs faster in these environments and is generally the better choice.
Moreover, it is robust to data loss, making it reliable to use in situations where perfect tracking would break.

\subsection{Memory use}

As discussed in Section \ref{sec:perfect-tracking-pruning-space}, pruning-enabled perfect tracking is fairly memory efficient ($\mathcal{O}(N)$, where $N$ is population size) under a wide range of practical circumstances.
However, in practice, its memory use can vary dramatically over time.
This variation is partially due to the fact that the expected distribution of coalescence times is exponential, leading to substantial variation about the mean.
It is also due to the potential for large variation in tree size caused by the ecological and evolutionary dynamics at play in a given scenario being tracked.
Anecdotally, the memory cost of perfect tracking often poses a real-world obstacle to using it at full resolution.

Like perfect tracking, hereditary stratigraphy's memory footprint grows linearly $\mathcal{O}(N)$ with population size.
Under configurations with guarantee static MRCA resolution at arbitrarily generational depth, memory footprint also scales with $G$ --- either $\mathcal{O}(G)$ or $\mathcal{O}(\log G)$ depending on whether guarantees are absolute or recency-proportional \citep{moreno2024algorithms}.
However, hereditary stratigraphy can also be conducted $\mathcal{O}(1)$ constant with respect to $G$.
Unlike pruning-enabled perfect tracking, hereditary stratigraphy's memory usage is a hard guarantee.
Consequently, it can be a good choice in cases where large fluctuations in memory use are not acceptable.
It can also be helpful in circumstances that inhibit coalescence and start pushing perfect tracking towards worst-case performance.

\subsection{Accuracy}

% TODO unifurcations?
As its name implies, perfect tracking returns completely accurate phylogenetic data.
Nevertheless, memory use can be tuned by increasing the granularity of data being recorded (i.e., make the taxonomic unit represented by each node more abstract).

Hereditary stratigraphy introduces imprecision that can potentially lead to inaccuracy.
At the same time, it introduces tools for precisely controlling the level of inaccuracy and its trade-off with memory and time costs.
As such, it can provide more granular scaleback of phylogenetic resolution than is possible through coarsening of taxonomic units under perfect tracking.
Hereditary stratigraphy is always tracked at the level of the individual (i.e., maximum taxonomic resolution), meaning it can capture phylogenetic events that perfect tracking would have to lump together due to memory constraints.

%TODO: Plug ALife paper if its not already in reconstruction

\subsubsection{Overall take-away}

Perfect tracking and hereditary stratigraphy are complimentary techniques that occupy different methodological niches.
In many simple cases, perfect tracking requires fewer resources and provides more accurate data.
However, hereditary stratigraphy can operate in computational environments that perfect tracking cannot.
It also enables new trade-offs to be made between data resolution, memory use, and data accuracy.
Lastly, while perfect tracking is often more memory efficient, hereditary stratigraphy can be configured to provide strong $\mathcal{O}(1)$ guarantees about memory use.

% Considerations:
% - Distributed vs. not
%     - data loss
%     - have to eventually collect data into a centralized representation?
%     - have to propagate extinction notifications during experiment (unlike phylogeny collation which could hypothetically occur after the fact)
% - Accuracy
% - Memory fluctuations

% \input{text/body/experimental-evaluation.tex}

\section{Conclusion} \label{sec:conclusion}

% Phylogenetic analysis of digital evolution systems provides a cinch application for ancestry trees estimated via hereditary stratigraphy, but other possibilities include tracking message propagation in distributed systems or content propagation in distributed social networks.
% To this end, we report a trie-based algorithm to enable ancestry tree reconstructions comprising large numbers of artifacts.

% ELD: TODO: I think this should maybe go in the conclusion?
% Although envisioned first-and-foremost as application-oriented tool for experiment instrumentation, this line of thinking also presents something of an intriguing curiosity.
% Typically, phylogeneticists tailor algorithms to the demands of available data, not the other way around.
% This notion invites deeper consideration of phylogenetic information encoding within genomes.
% For example, it would be interesting to explore upper bounds on phylogenetic reconstruction accuracy given genome length for different generational depths and population sizes.

We have documented the algorithmic backing of both direct phylogenetic tracking and the emerging ``hereditary stratigraphy'' methodology for distributed lineage tracking of replicating digital artifacts.
% This method works by annotating replicating entities with a heritable annotation, and then comparing end-state annotations after the fact to estimate their phylogenetic relationships --- akin to how biologists infer phylogenetic relationships from biological sequences.
% We have shown the underlying mechanism of hereditary stratigraphy to be an instance of a more general ``streaming curation'' problem.
% This problem requires online subsampling of a rolling observation stream to balance requirements to maintain information over stream history and to limit the number of retained observations.
% In addition to hereditary stratigraphy, the streaming curation problem is faced in use cases involving memory-constrainged data logger or wiresless sensor network devices.
% We identify five policy algorithms that provide different size/coverage trade-offs for the streaming curation problem, and provide procedures to enact them efficiently.

With respect to hereditary stratigraphy, we have here focused on the postprocessing algorithms used to compare and interpret hereditary stratigraphic annotations.
In particular, we present a trie-based tree reconstruction algorithm that proceeds from the underlying structure of the hereditary stratigraphy annotation.
As a point of comparison, we provide an algorithmic formalization of the established perfect-tracking algorithms currently standard in agent-based evolutionary modeling.
We find that both perfect tracking and hereditary stratigraphy have circumstances in which they excel.
%Hereditary stratigraphy enables tracking in noisy, parallel, and distributed environments where perfect tracking would be impractical.
%It also makes stoner guarantees about memory use (although on average perfect tracking uses less memory than most hereditary stratigraphy policies).
%Lastly, it allows different tradeoffs between data resolution, accuracy, and memory use.
%TODO talk about the take away of comparing the two
% ELD: Took a stab at it

% Much future work remains.
% We are particularly interested in adapting hereditary stratigraphy to even more effectively make use of space in fixed-capacity contexts and minimize rearrangement of strata in the underlying annotation data structure (e.g., array of strata) during the update process.
% One promising direction involves adapting retention policies to directly specify the index positions to replace rather than the deposit times of strata to prune.
% Additionally, more general applications of the streaming curation problem would be worth exploring more directly.
% There is also more theoretical analysis to be done.
% Most notably, through the lens of information theory it would be possible to conduct a more formal analysis of the information-retention trade-offs among different policy algorithms.
% This work would also open the door to a more precise analysis of the relationship between pairwise relatedness estimation accuracy and overall tree reconstruction quality.
% Finally, it would be worth assessing the extent to which curation policy algorithms minimize the number of records necessary to achieve their guarantees.

Ultimately, our interest in extraction of phylogenetic history is application-driven.
To these ends, reference implementations of most algorithms described are provided in the public-facing \texttt{hstrat} and \texttt{phylotrackpy} Python libraries \citep{moreno2022hstrat, dolson2023phylotrackpy}.
These libraries enable application of perfect phylogenetic tracking and hereditary stratigraphy techniques in a few lines of code.

\putbib

\end{bibunit}

\clearpage
\newpage

\setcounter{section}{0}
\makeatletter
\renewcommand \thesection{S\@arabic\c@section}
\renewcommand\thetable{S\@arabic\c@table}
\renewcommand \thefigure{S\@arabic\c@figure}
\makeatother

\begin{bibunit}

\section{Appendix}

% \printunsrtglossary[numberedsection=autolabel]

\subsection{Additional notes on the memory requirements of perfect tracking} \label{sec:perfect-tracking-space-supp}
% TODO: section name that isn't monstorously long

Note that perfect phylogeny tracking algorithms are designed to plug into a larger computational process.
For phylogeny tracking to be relevant, this computational process must have a collection of $N$ objects that are currently eligilble to be copied (in evolutionary terms, the population).
Therefore, the space complexity of this process must be at least $\mathcal{O}(N)$.
As long as the expected size of $T$ scales no faster than $\mathcal{O}(N)$, then, phylogeny tracking will not be the primary factor determining memory usage.
While the memory cost of phylogeny tracking could still be significant in these cases, it is unlikely to be the primary factor determining whether running the program is tractable.

Previously, we noted that cases where perfect tracking with pruning achieves its worst case size order of growth are esoteric.
There are two ways that theses situations can occur:

\begin{itemize}
\item $N$ is continuously increasing.
\item There is effectively no coalescence.
\end{itemize}

In case 1, the space complexity of the external computational process will also be growing at least as fast as $\mathcal{O}(N)$.
Because $N$ itself is growing, this behavior should overwhelm the size of the history in $T$ in most cases.
Even in the (completely unrealistic) case where $N$ is allowed to reach infinity, there is an argument to be made that the internal nodes of $T$ would not be substantially contributing to memory complexity.
While this argument is beyond the scope of this paper, it would center around the fact that most evolutionarily-realistic coalescence proceses ``come down from infinity'' \citep{berestyckiRecentProgressCoalescent2009}.
% Indeed, unless $N$ starts out infinite, any process observed by phylogenetic tracking must come down from infinity, as they 

Case 2 is a legitimate circumstance where worst case memory complexity could occur. 
It would occur in cases where every member of the population has (approximately) one offspring.
While this level of regularity is unlikely in the context of evolution, it could happen when tracking non-evolutionry digital artifacts.

% \subsection{Extant Record Size Order of Growth of Retention Policy Algorithms} \label{sec:extant_record_oog}
% The concept of `extant record order of growth' is defined to be the upper bound of memory usage of a policy algorithm with respect to elapsed generations.
% Since policy algorithms determine the strata retained into the following generation without the use of any additional data structures, the upper bound of their memory usage will be equivalent in order to the number of retained strata.
% In fact, given that no intermediate data structures other than the final output array are used, we can guarantee the lower bound will also be equivalent.

% As such, Extant Record Size Order of Growth proofs will set out to prove that the number of retained strata of each respective retention policy algorithm will grow with an order of $\mathcal{\theta}(f(n, k)),$ where $f$ is unique to each retention policy algorithm, $k$ is a user-defined constant and $n$ is the number of strata in the current generation.

\end{bibunit}


\begin{thebibliography}{}

\bibitem[Ackley and Cannon, 2011]{ackley2011pursue}
Ackley, D.~H. and Cannon, D.~C. (2011).
\newblock Pursue robust indefinite scalability.
\newblock In Welsh, M., editor, {\em Proceedings of the 13th USENIX Conference
  on Hot Topics in Operating Systems}, HotOS'13, page~8, Usa. USENIX
  Association.

\bibitem[Arndt and Nisbet, 2012]{arndt2012processes}
Arndt, N.~T. and Nisbet, E.~G. (2012).
\newblock Processes on the young earth and the habitats of early life.
\newblock {\em Annual Review of Earth and Planetary Sciences}, 40:521--549.

\bibitem[Berestycki, 2009]{berestyckiRecentProgressCoalescent2009}
Berestycki, N. (2009).
\newblock Recent progress in coalescent theory.

\bibitem[Burke et~al., 2003]{burke2003increased}
Burke, E., Gustafson, S., Kendall, G., and Krasnogor, N. (2003).
\newblock Is increased diversity in genetic programming beneficial? an analysis
  of lineage selection.
\newblock In {\em The 2003 Congress on Evolutionary Computation, 2003. CEC
  '03.}, volume~2, pages 1398--1405 Vol.2.

\bibitem[Cappello et~al., 2014]{cappello2014toward}
Cappello, F., Al, G., Gropp, W., Kale, S., Kramer, B., and Snir, M. (2014).
\newblock Toward exascale resilience: 2014 update.
\newblock {\em Supercomputing Frontiers and Innovations: an International
  Journal}, 1(1):5--28.

\bibitem[Cohen, 1987]{cohen1987computer}
Cohen, F. (1987).
\newblock Computer viruses: theory and experiments.
\newblock {\em Computers \& security}, 6(1):22--35.

\bibitem[Dang et~al., 2018]{dang2018escaping}
Dang, D.-C., Friedrich, T., Kötzing, T., Krejca, M.~S., Lehre, P.~K., Oliveto,
  P.~S., Sudholt, D., and Sutton, A.~M. (2018).
\newblock Escaping local optima using crossover with emergent diversity.
\newblock {\em IEEE Transactions on Evolutionary Computation}, 22(3):484--497.

\bibitem[Delmas et~al., 2008]{delmasAsymptoticResultsLength2008}
Delmas, J.-F., Dhersin, J.-S., and Siri-Jegousse, A. (2008).
\newblock Asymptotic results on the length of coalescent trees.
\newblock {\em The Annals of Applied Probability}, 18(3):997--1025.
\newblock Publisher: Institute of Mathematical Statistics.

\bibitem[Dolson et~al., 2020]{dolson2020interpreting}
Dolson, E., Lalejini, A., Jorgensen, S., and Ofria, C. (2020).
\newblock Interpreting the tape of life: ancestry-based analyses provide
  insights and intuition about evolutionary dynamics.
\newblock {\em Artificial Life}, 26(1):58--79.

\bibitem[Dolson et~al., view]{dolson2023phylotrackpy}
Dolson, E., Rodriguez-Papa, S., and Moreno, M.~A. (in review).
\newblock phylotrackpy: Python phylogeny tracker for in silico evolution
  experiments.
\newblock {\em Journal of Open Source Software}.

\bibitem[Drmota et~al., 2007]{drmotaAsymptoticResultsConcerning2007}
Drmota, M., Iksanov, A., Moehle, M., and Roesler, U. (2007).
\newblock Asymptotic results concerning the total branch length of the
  bolthausen-sznitman coalescent.
\newblock {\em Stochastic Processes and their Applications},
  117(10):1404--1421.

\bibitem[Faith, 1992]{faithConservationEvaluationPhylogenetic1992}
Faith, D.~P. (1992).
\newblock Conservation evaluation and phylogenetic diversity.
\newblock {\em Biological Conservation}, 61(1):1--10.

\bibitem[Fredkin, 1960]{fredkin1960trie}
Fredkin, E. (1960).
\newblock Trie memory.
\newblock {\em Communications of the ACM}, 3(9):490--499.

\bibitem[French et~al., 2023]{frenchHostPhylogenyShapes2023}
French, R.~K., Anderson, S.~H., Cain, K.~E., Greene, T.~C., Minor, M.,
  Miskelly, C.~M., Montoya, J.~M., Wille, M., Muller, C.~G., Taylor, M.~W.,
  Digby, A., and Holmes, E.~C. (2023).
\newblock Host phylogeny shapes viral transmission networks in an island
  ecosystem.
\newblock {\em Nature Ecology \& Evolution}, pages 1--10.
\newblock Publisher: Nature Publishing Group.

\bibitem[Friggeri et~al., 2014]{friggeri2014rumor}
Friggeri, A., Adamic, L., Eckles, D., and Cheng, J. (2014).
\newblock Rumor cascades.
\newblock {\em Proceedings of the International AAAI Conference on Web and
  Social Media}, 8(1):101--110.

\bibitem[Fukuyama, 2016]{fukuyama2016partial}
Fukuyama, J. (2016).
\newblock Partial-match queries with random wildcards: In tries and distributed
  hash tables.
\newblock {\em CoRR}, abs/1601.04213.

\bibitem[Gnedin et~al., 2014]{gnedinLcoalescentsSurvey2014}
Gnedin, A., Iksanov, A., and Marynych, A. (2014).
\newblock {{$\lambda$}}-coalescents: a survey.
\newblock {\em Journal of Applied Probability}, 51:23--40.
\newblock Publisher: Cambridge University Press.

\bibitem[Godin-Dubois et~al., 2019]{godin2019apoget}
Godin-Dubois, K., Cussat-Blanc, S., and Duthen, Y. (2019).
\newblock Apoget: Automated phylogeny over geological timescales.
\newblock In {\em ALIFE 2019 (MethAL workshop)}.

\bibitem[Hernandez et~al., 2022]{hernandez2022can}
Hernandez, J.~G., Lalejini, A., and Dolson, E. (2022).
\newblock What can phylogenetic metrics tell us about useful diversity in
  evolutionary algorithms?
\newblock In {\em Genetic programming theory and practice XVIII}, pages 63--82.
  Springer.

\bibitem[Jefferson et~al., 1990]{jefferson1990evolution}
Jefferson, D., Collins, R., Cooper, C., Dyer, M., Flowers, M., Korf, R.,
  Taylor, C., and Wang, A. (1990).
\newblock {\em Evolution as a theme in artificial life: The genesys/tracker
  system}.
\newblock University of California (Los Angeles). Computer Science Department.

\bibitem[Kim et~al., 2006]{kim2006discovery}
Kim, T.~K., Hewavitharana, A.~K., Shaw, P.~N., and Fuerst, J.~A. (2006).
\newblock Discovery of a new source of rifamycin antibiotics in marine sponge
  actinobacteria by phylogenetic prediction.
\newblock {\em Applied and environmental microbiology}, 72(3):2118--2125.

\bibitem[Kingman, 1982]{kingmanCoalescent1982}
Kingman, J. F.~C. (1982).
\newblock The coalescent.
\newblock {\em Stochastic Processes and their Applications}, 13(3):235--248.

\bibitem[Koza, 1994]{koza1994genetic}
Koza, J.~R. (1994).
\newblock Genetic programming as a means for programming computers by natural
  selection.
\newblock {\em Statistics and computing}, 4:87--112.

\bibitem[Lalejini et~al., 2024a]{lalejini2024phylogeny}
Lalejini, A., Moreno, M.~A., Hernandez, J.~G., and Dolson, E. (2024a).
\newblock Phylogeny-informed fitness estimation for test-based parent
  selection.
\newblock In Winkler, S., Trujillo, L., Ofria, C., and Hu, T., editors, {\em
  Genetic Programming Theory and Practice XX}, pages 241--261. Springer
  International Publishing.

\bibitem[Lalejini et~al., 2024b]{lalejini2024runtime}
Lalejini, A., Sanson, M., Garbus, J., Moreno, M.~A., and Dolson, E. (2024b).
\newblock Runtime phylogenetic analysis enables extreme subsampling for
  test-based problems.

\bibitem[Lewinsohn et~al.,
  2023]{lewinsohnStatedependentEvolutionaryModels2023a}
Lewinsohn, M.~A., Bedford, T., Müller, N.~F., and Feder, A.~F. (2023).
\newblock State-dependent evolutionary models reveal modes of solid tumour
  growth.
\newblock {\em Nature Ecology \& Evolution}, 7(4):581--596.

\bibitem[McPhee et~al., 2018]{mcphee2018detailed}
McPhee, N.~F., Finzel, M.~D., Casale, M.~M., Helmuth, T., and Spector, L.
  (2018).
\newblock A detailed analysis of a pushgp run.
\newblock {\em Genetic Programming Theory and Practice XIV}, pages 65--83.

\bibitem[Mehta and Sahni, 2018]{mehta2018handbook}
Mehta, D.~P. and Sahni, S. (2018).
\newblock {\em Handbook of data structures and applications}.
\newblock Taylor \& Francis.

\bibitem[Moreno, 2024]{moreno2024methods}
Moreno, M.~A. (2024).
\newblock Methods for rich phylogenetic inference over distributed sexual
  populations.
\newblock In Winkler, S., Trujillo, L., Ofria, C., and Hu, T., editors, {\em
  Genetic Programming Theory and Practice XX}, pages 125--141. Springer
  International Publishing.

\bibitem[Moreno et~al., 2022a]{moreno2022hereditary}
Moreno, M.~A., Dolson, E., and Ofria, C. (2022a).
\newblock Hereditary stratigraphy: genome annotations to enable phylogenetic
  inference over distributed populations.
\newblock In {\em Proceedings of the Genetic and Evolutionary Computation
  Conference Companion}, pages 65--66.

\bibitem[Moreno et~al., 2022b]{moreno2022hstrat}
Moreno, M.~A., Dolson, E., and Ofria, C. (2022b).
\newblock hstrat: a python package for phylogenetic inference on distributed
  digital evolution populations.
\newblock {\em Journal of Open Source Software}, 7(80):4866.

\bibitem[Moreno et~al., 2024]{moreno2024algorithms}
Moreno, M.~A., Papa, S.~R., and Dolson, E. (2024).
\newblock Algorithms for efficient, compact online data stream curation.

\bibitem[Moreno et~al., 2023]{moreno2023toward}
Moreno, M.~A., Rodriguez~Papa, S., and Dolson, E. (2023).
\newblock Toward phylogenetic inference of evolutionary dynamics at scalens to
  enable phylogenetic inference over distributed populations.
\newblock In {\em ALIFE 2023: The 2023 Conference on Artificial Life}, volume
  ALIFE 2023: The 2023 Conference on Artificial Life of {\em ALIFE 2023: The
  2023 Conference on Artificial Life}.

\bibitem[Murphy and Ryan, 2008]{murphy2008simple}
Murphy, G. and Ryan, C. (2008).
\newblock A simple powerful constraint for genetic programming.
\newblock In O'Neill, M., Vanneschi, L., Gustafson, S., Esparcia~Alc{\'a}zar,
  A.~I., De~Falco, I., Della~Cioppa, A., and Tarantino, E., editors, {\em
  Genetic Programming}, pages 146--157, Berlin, Heidelberg. Springer Berlin
  Heidelberg.

\bibitem[Neher, 2013]{neherGeneticDraftSelective2013}
Neher, R.~A. (2013).
\newblock Genetic draft, selective interference, and population genetics of
  rapid adaptation.
\newblock {\em Annual Review of Ecology, Evolution, and Systematics},
  44(1):195--215.
\newblock \_eprint: https://doi.org/10.1146/annurev-ecolsys-110512-135920.

\bibitem[Nordborg, 2019]{nordborgCoalescentTheory2019}
Nordborg, M. (2019).
\newblock Coalescent theory.
\newblock In {\em Handbook of Statistical Genomics}, pages 145--30. John Wiley
  \& Sons, Ltd.

\bibitem[Ofria et~al., 2009]{ofria2009artificial}
Ofria, C., Bryson, D.~M., and Wilke, C.~O. (2009).
\newblock Avida.
\newblock In Komosinski, M. and Adamatzky, A., editors, {\em Artificial Life
  Models in Software}, pages 3--35. Springer London, London.

\bibitem[Peng, 2007]{peng2007distance}
Peng, C. (2007).
\newblock Distance based methods in phylogenetic tree construction.
\newblock {\em Neural Parallel And Scientific Computatiions}, 15(4):547.

\bibitem[Shahbandegan et~al., 2022]{shahbandegan2022untangling}
Shahbandegan, S., Hernandez, J.~G., Lalejini, A., and Dolson, E. (2022).
\newblock Untangling phylogenetic diversity's role in evolutionary computation
  using a suite of diagnostic fitness landscapes.
\newblock In {\em Proceedings of the Genetic and Evolutionary Computation
  Conference Companion}, pages 2322--2325.

\bibitem[Stamatakis, 2005]{STAMATAKIS2005phylogenetics}
Stamatakis, A. (2005).
\newblock Phylogenetics: Applications, software and challenges.
\newblock {\em Cancer Genomics \& Proteomics}, 2(5):301--305.

\bibitem[Tellier and Lemaire, 2014]{tellierCoalescenceMultipleBranching2014}
Tellier, A. and Lemaire, C. (2014).
\newblock Coalescence 2.0: a multiple branching of recent theoretical
  developments and their applications.
\newblock {\em Molecular Ecology}, 23(11):2637--2652.
\newblock \_eprint: https://onlinelibrary.wiley.com/doi/pdf/10.1111/mec.12755.

\bibitem[Whitman et~al., 1998]{whitman1998prokaryotes}
Whitman, W.~B., Coleman, D.~C., and Wiebe, W.~J. (1998).
\newblock Prokaryotes: the unseen majority.
\newblock {\em Proceedings of the National Academy of Sciences},
  95(12):6578--6583.

\end{thebibliography}


\begin{thebibliography}{}

\bibitem[Berestycki, 2009]{berestyckiRecentProgressCoalescent2009}
Berestycki, N. (2009).
\newblock Recent progress in coalescent theory.

\end{thebibliography}
\end{document}